\newcolumntype{C}[1]{>{\centering\let\newline\\\arraybackslash\hspace{0pt}}m{#1}}
\theoremstyle{definition}
\newtheorem{proposition}{Proposition}[section]
\tikzstyle{block} = [draw, fill=blue!20, rectangle, 
\tikzstyle{sum} = [draw, fill=blue!20, circle, node distance=1cm]
\tikzstyle{input} = [coordinate]
\tikzstyle{output} = [coordinate]
\tikzstyle{pinstyle} = [pin edge={to-,thin,black}]
\tikzstyle{bigblock} = [draw, fill=blue!20, rectangle, 
\tikzstyle{medblock} = [draw, fill=blue!20, rectangle, 
\tikzstyle{mux} = [draw, fill=black!20, rectangle, 
\tikzstyle{smallblock} = [draw, fill=blue!20, rectangle, 
\tikzstyle{sum} = [draw, fill=blue!20, circle, node distance=1cm]
\tikzstyle{signal} = [coordinate]
\tikzstyle{pinstyle} = [pin edge={to-,thin,black}]
\tikzstyle{block} = [draw, fill=blue!20, rectangle, 
\tikzstyle{blockS} = [draw, fill=blue!20, rectangle, 
\tikzstyle{input} = [coordinate]
\tikzstyle{output} = [coordinate]
\newcommand{\bc}{\begin{center}}
\newcommand{\ec}{\end{center}}
\newcommand{\benum}{\begin{enumerate}}
\newcommand{\eenum}{\end{enumerate}}
\newcommand{\nn}{\nonumber}
\newcommand{\matl}{\left[ \begin{array}}
\newcommand{\matr}{\end{array} \right]}
\newcommand{\matls}{\left[ \begin{smallmatrix}}
\newcommand{\matrs}{\end{smallmatrix} \right]}
\newcommand{\isdef}{\stackrel{\triangle}{=}}
\newcommand{\vect}[1]{\overset{\rightharpoonup}{#1}}
\newcommand{\rmE}{{\rm E}}
\newcommand{\rmG}{{\rm G}}
\newcommand{\rmI}{{\rm I}}
\newcommand{\rmT}{{\rm T}}
\newcommand{\rmc}{{\rm c}}
\newcommand{\rmd}{{\rm d}}
\newcommand{\rmi}{{\rm i}}
\newcommand{\rmm}{{\rm m}}
\newcommand{\rmp}{{\rm p}}
\newcommand{\rms}{{\rm s}}
\newcommand{\BBR}{{\mathbb R}}
\newcommand{\shiftq}{{\textbf{\textrm{q}}}}
\newcommand{\resolvedinFrame}[1]{{\big|_{\rm #1}}}
\newcommand{\ihat}{ {\hat \imath}}
\newcommand{\jhat}{ {\hat \jmath}}
\newcommand{\khat}{ {\hat k}}
\title{An A Quadcopter Autopilot Based on an Adaptive Digital PID Controller}
\title{Adaptive Digital PID Control of a Quadcopter}
\title{Retrospective-Cost-Based Adaptive Digital PID Control of a Quadcopter}
\title{A Retrospective-Cost-Based Adaptive Digital PID   Quadcopter Autopilot}
\title{Adaptive Digital PID Control of a Quadcopter with Unknown Dynamics}
\title{One-Shot Learning for a Quadcopter Autopilot}
\title{An adaptive digital autopilot for Multicopters}
\title{\LARGE \bf Experimental Implementation of an Adaptive Digital Autopilot with Applications }
\title{\LARGE \bf An Adaptive Digital Autopilot with Applications \\ for Fixed-wing Aircraft Control }
\title{\LARGE \bf An Adaptive Digital Autopilot\\ for Fixed-Wing Aircraft with Actuator Faults}
\author{
    Joonghyun Lee,
    John Spencer,
    Juan Augusto Paredes, 
    Sai Ravela, 
    Dennis S. Bernstein,
    Ankit Goel%
\thanks{This research was supported in part by the Office of Naval Research under grant N00014-19-1-2273.}
\thanks{Joonghyun Lee, John Spencer, Juan Augusto Paredes, and Dennis S. Bernstein are with the Department of Aerospace Engineering, University of Michigan, Ann Arbor, MI 48109.
{\tt\small joonghle, spjohn, jparedes,}
{\tt \small dsbaero@umich.edu}
}
\thanks{Sai Ravela is with the Department of Earth, Atmospheric, and Planetary Sciences, MIT, Cambridge, MA 02139.
{\tt\small ravela@mit.edu}}
\thanks{Ankit Goel is with the Department of Mechanical Engineering, University of Maryland, Baltimore County, MD 21250.
{\tt \small ankgoel@umbc.edu}}
}
\date{}
\begin{document}

\maketitle

\begin{abstract}
    This paper develops an adaptive digital autopilot for a fixed-wing aircraft and compares its performance with a fixed-gain autopilot.
    The adaptive digital autopilot is constructed by augmenting the autopilot architecture implemented in PX4 flight stack with  adaptive digital control laws that are updated using the retrospective cost adaptive control algorithm.  
    In order to investigate the performance of the adaptive digital autopilot, the default gains of the fixed-gain autopilot are scaled down to degrade its performance.
    This scenario provides a venue for determining the ability of the adaptive digital autopilot to compensate for the detuned fixed-gain autopilot.
    Next, the performance of the adaptive autopilot is examined under failure conditions by simulating a scenario where one of the control surfaces is assumed to be stuck at an unknown angular position. 
    The adaptive digital autopilot is tested in simulation, and the resulting performance improvements are examined.  
    
\end{abstract}

\section{Introduction}

Autonomous flight control of an aircraft under rapidly changing conditions depends on an effective autopilot that can control the aircraft in uncertain environments and without detailed models. 
The autopilot for fixed-wing aircrafts typically consists of a set of trim commands along with low-level controllers to track intermediate commands.
The trim conditions for an aircraft can be computed by solving nonlinear algebraic equations for trim equilibria \cite{mcclamroch2011steady}, but a detailed knowledge of aircraft aerodynamics is required.
Moreover, for low-cost aircrafts that are usually repaired or modified onsite, the true aerodynamic properties may be different from nominal aerodynamics. 
Consequently, a fixed-gain autopilot may not be able to maintain performance in rapidly changing environment or under failure conditions such as chipped wings or stuck actuators. 
In this scenario, an adaptive autopilot may be able to compensate for the lost performance by updating the autopilot gains accordingly. 
With these motivations in mind, this paper explores the use of an in situ learning technique to modify the autopilot during the flight. 


Several adaptive control techniques have been investigated for the problem of fixed-wing aircraft control
\cite{nguyen2006dynamics}. 
A sliding mode fault-tolerant tracking control scheme was used for control of a fixed-wing UAV under actuator saturation and state constraints in \cite{9262225, 9476716}. 
A backstepping algorithm was used  in \cite{hirano2019controller} to design a nonlinear flight controller for a fixed-wing UAV with thrust vectoring. 
An MRAC-based control technique was used to augment the control system to improve the dynamic performance of a fixed-wing aircraft in \cite{9189264}. 
However, these techniques rely on the availability of a sufficiently detailed model for the control system synthesis. 
In contrast, this paper uses the the retrospective cost adaptive control (RCAC) algorithm to learn the autopilot from the measured data in situ. 
RCAC is a digital adaptive control technique that is applicable to stabilization, command-following, and disturbance-rejection problems. 
Instead of relying on a model of the system, RCAC uses the past measured data and past applied input to recursively optimize the controller gains.
RCAC is described in detail in \cite{rahmanCSM2017} and its extension to digital PID control is given in \cite{rezaPID}.
The application of the RCAC algorithm for a multicopter autopilot are described in \cite{goel_adaptive_pid_2021,quadtuner2021}.




The contribution of this paper is the development of an adaptive digital autopilot, and comparison of its performance with a well-tuned fixed-gain autopilot under nominal conditions, rapidly changing conditions, and actuator failure. 
In particular, this paper presents the potential advantages of an adaptive autopilot by investigating two scenarios. 
In the first scenario, a well-tuned fixed-gain controller is detuned by scaling all of the gains by a small factor, and it is shown that the adaptive autopilot is able to compensate for the detuned gains by learning the necessary gains.
In the second scenario, the aircraft is simulated with a frozen aileron, thus emulating an actuator failure condition, and it is shown that the adaptive autopilot  improves the trajectory-tracking performance.
%

%

The paper is organized as follows:
Section \ref{sec:notation} defines the notation used in this paper, %
Section \ref{sec:PX4_autopilot} reviews the autopilot architecture implemented in the PX4 flight stack,
Section \ref{sec:RCAC} reviews the retrospective cost adaptive algorithm,
Section \ref{sec:adaptiveAugmentation} presents the adaptive augmentation of autopilot, 
Section \ref{sec:flight_tests} presents the simulation tests,
and finally, 
Section \ref{sec:conclusions} concludes the paper with a summary and future research directions.

\section{Notation}
\label{sec:notation}
Let $\rm F_E$ denote an Earth-fixed frame such that $\khat_\rmE$ is aligned with gravity $\vect g.$
Let $\rm F_{AC}$ denote an aircraft-fixed frame such that $\ihat_{\rm AC}$ is aligned with the fuselage, 
$\jhat_{\rm AC}$ is along the wing, and 
$\khat_{\rm AC}$ is chosen to complete the right-handed frame. 
Note that $\khat_{\rm AC}$ points vertically down. 
Next, let $\rmc$ denote the center of mass of the aircraft, and let $w$ be an point fixed on Earth.
The coordinates of the aircraft relative to $w$ in the Earth frame are denoted by $r \isdef \vect r_{\rmc/w}\resolvedinFrame{E} \in \BBR^3.$
The velocity of the aircraft relative to $w$ in the Earth frame is  $ v\isdef  \vect v_{\rmc/w/\rmE}\resolvedinFrame{E} \in \BBR^3.$ 
%
Let $\Theta, \Psi,$ and $\Phi$ denote the pitch, azimuth, and the roll angle of the aircraft. 
The angular velocity of $\rm F_{AC}$ relative to  $\rm F_E$ in the aircraft-fixed frame is given by $\omega \isdef \vect \omega_{{\rm AC} / \rmE}\resolvedinFrame{\rm AC} \in \BBR^3.$
The angular acceleration of $\rm F_{AC}$ relative to  $\rm F_E$ in the aircraft-fixed frame is given by $\alpha \isdef \vect \alpha_{{\rm AC} / \rmE}\resolvedinFrame{\rm AC} \in \BBR^3.$
%
The measurement of the variable $x$ is denoted by $x_\rmm$, and the setpoint for the variable $x$ is denoted by $x_\rms.$
Finally, let $e_3 \isdef \matl{c c c} 0 & 0 & 1 \matr^{\rmT}.$

\section{PX4 Autopilot}
\label{sec:PX4_autopilot}
In this section, the control system implemented in the PX4 autopilot for fixed-wing aircraft is described. %
%
The control system consists of a mission planner and two cascaded controllers in nested loops as shown in Figure  \ref{fig:PX4_autopilot_nested_loop}.
The mission planner generates position setpoints based on the user-defined waypoints. 

\begin{figure}[h]
    \centering
    \resizebox{\columnwidth}{!}{
    \begin{tikzpicture}[auto, node distance=2cm,>=latex',text centered]
    
        \node [smallblock, minimum height=3em, text width=1.6cm] (Mission) {\small Mission Planner};
        \node [smallblock, minimum height=3em, right = 5em of Mission, text width=1.6cm] (Pos_Cont) {\small Position Controller};
        \node [smallblock, minimum height=3em, below right = 1.75em and -2.25 em of Pos_Cont,text width=1.6cm] (Att_Cont) {\small Attitude Controller};
        \node [smallblock, minimum height=3em, text width=1.6cm, minimum width = 5.5em,  right = 4em of Pos_Cont] (FWAircraft) {\small Fixed-Wing Aircraft};
        
        \draw [->] (Mission) -- node[above, xshift = -0.05 em, yshift = 0.1em]{ {}$r_\rms, V_{\rmT, \rms}$} (Pos_Cont);
        \draw [->] (Pos_Cont.-10) -| ([xshift = 0.75em, yshift = -2em]Pos_Cont.-10) -| node  [xshift = 3.15em, yshift = 1em] {\small {}$\begin{array}{c} \Phi_{\rms} , \\ \Theta_{\rms} \end{array}$} ([xshift = -0.75em]Att_Cont.west) -- (Att_Cont.west);
        \draw [->] (Pos_Cont.10) -- node [above, xshift = 0.05em, yshift = 0.1 em] { \small{}$T _\rms$} (FWAircraft.170);
        \draw [->] (Att_Cont.0) -- +(0.15,0) |- node [below, xshift = 0.7 em, yshift = -2 em]{\small {}$\alpha_{\rms}$}(FWAircraft.190);
        \draw [->] (FWAircraft.0) -- +(.4,0) |- node[below,xshift = -6em, yshift = 0.05em]{ \small {} $r_{\rmm}, V_{\rmT}, V_{\rmG}$} ([xshift = -1.5em, yshift = -7.8 em]Pos_Cont.south) --  ([xshift = -1.5em]Pos_Cont.south);
        \draw [->] (FWAircraft.350) -- +(0.2,0) |- node[below,xshift = -5em]{ \small {} $ \Phi_{\rmm}, \Theta_{\rmm}, V_{\rmT}, V_{\rmI} , \omega_{\rmm}$} ([yshift = -0.75 em]Att_Cont.south) -- (Att_Cont.south);
         \draw [->] (FWAircraft.10) -- +(.6,0) |- node[below,xshift = -6em, yshift = -0.05em]{ \small {} $V_{\rmG}$} ([yshift = -10 em]Mission.south) --  (Mission.south);
    \end{tikzpicture}
    }
    \vspace{-2em}\caption{\footnotesize PX4 autopilot architecture. }
    \label{fig:PX4_autopilot_nested_loop}
\end{figure}
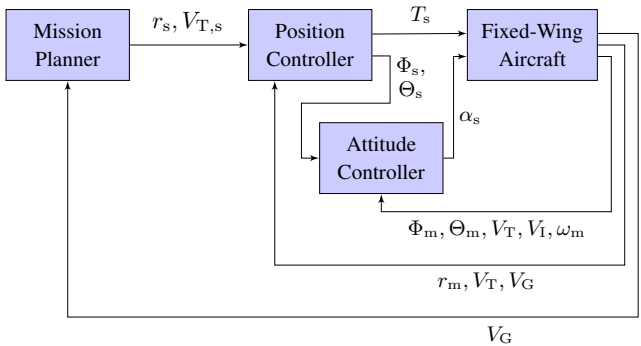
%

%
%
%
The outer loop, also called the position controller, consists of two decoupled controllers for the longitudinal and the lateral motion of the aircraft, and is shown in Figure \ref{fig:PX4_autopilot_outer_loop}. 
The longitudinal controller is based on the total energy control system (TECS) described in \cite{bruce1986,faleiro1999,lambregts2013,Argyle2016} and the lateral controller is based on the guidance law described in \cite{park2004}.
%
The input to the position controller are the true airspeed setpoint $V_{\rmT, \rms},$ the position setpoint $r_\rms,$ the true airspeed $V_{\rmT},$ the position measurement $r_\rmm,$ and the ground velocity $V_{\rmG}.$ 
Note that the TECS input include the altitude setpoint $h_\rms \isdef e_3^\rmT r_\rms,$ and the altitude measurement $h _\rmm \isdef e_3^\rmT r_\rmm.$
The longitudinal controller generates the thrust and the pitch setpoint, and the lateral controller generates the roll setpoint. 
%
%
The output of the position controller is thus the thrust setpoint $T_\rms$ and the attitude setpoint ($\Phi_\rms, \Theta_\rms,$ since $\Psi_\rms$ is not required by the attitude controller).

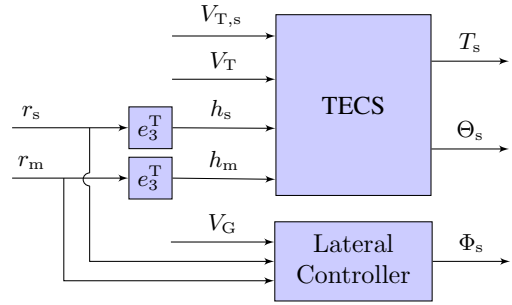
\begin{figure}[h]
    \centering
    \resizebox{0.8\columnwidth}{!}{
    \begin{tikzpicture}[auto, node distance=2cm,>=latex']
        \node[smallblock, minimum width = 6em, minimum height = 7 em] (TECS) {TECS};
        %
        %
        \node[smallblock, below = 4.5em of TECS.center, minimum width = 2.5em, minimum height = 3 em] (L1C) {$\begin{array}{c}{\rm Lateral} \\ {\rm Controller}\end{array}$};
        \node[smallblock, below left = 0.05em and 7em of TECS.center, minimum width = 1em, minimum height = 1 em] (r2h1) {\small $e_3^\rmT$};
        \node[smallblock, below left = 2em and 7em of TECS.center, minimum width = 1em, minimum height = 1 em] (r2h2) {\small $e_3^\rmT$};
        \draw[->] (r2h1.east) -- node[above, xshift = -0.1em]{\small$h_{\rms}$} ([yshift = -0.9em]TECS.west);
        \draw[->] (r2h2.east) -- node[above]{\small$h_{\rmm}$} ([yshift = -2.875em]TECS.west);
        \draw[->] ([xshift = -4.5em]r2h1.west) -- node[above, xshift = -1.5em]{\small$r_\rms$} (r2h1.west);
        \draw[->] ([xshift = -4.5em]r2h2.west) -- node[above, xshift = -1.5em]{\small$r_\rmm$} (r2h2.west);
        \draw[->] ([xshift = -4em, yshift = 2.675em]TECS.west) -- node[above, xshift = -0.1em]{\small$V_{\rmT, \rms}$} ([yshift = 2.675em]TECS.west);
        \draw[->] ([xshift = -4em, yshift = 1em]TECS.west) -- node[above]{\small$V_{\rmT}$} ([yshift = 1em]TECS.west);
        \draw[->] ([xshift = -4em, yshift = 0.75em]L1C.west) -- node[above]{\small$V_{\rmG}$} ([yshift = 0.75em]L1C.west);
        \draw[-] ([xshift = -1.5em]r2h1.west)--([xshift = -1.5em, yshift = -1.75em]r2h1.west);
        \draw ([xshift = -1.5em, yshift = -2.25 em]r2h1.west) arc (270:90:0.25em);
        \draw[->] ([xshift = -1.5em, yshift = -2.25em]r2h1.west) |- (L1C.west);
        \draw[->] ([xshift = -2.5em]r2h2.west) |- ([yshift = -0.75em]L1C.west);
        \draw[->] ([yshift = 1.7em]TECS.east) -- node[above]{\small $T_{\rms}$} ([xshift = 3em, yshift = 1.7em]TECS.east);
        \draw[->] ([yshift = -1.7em]TECS.east) --  node[above]{\small $\Theta_{\rms}$} ([xshift = 3em, yshift = -1.7em]TECS.east);
        \draw[->] (L1C.east) --  node[above]{\small $\Phi_{\rms}$} ([xshift = 3em]L1C.east);
    \end{tikzpicture}
    }
    \caption{\footnotesize PX4 autopilot position controller. }
    \label{fig:PX4_autopilot_outer_loop}
\end{figure}

The inner loop, also called the attitude controller, consists of two cascaded controllers, and is shown in Figure \ref{fig:PX4_autopilot_inner_loop}. 
The first controller uses the pitch and roll errors and a proportional control law to generate the pitch-rate and roll-rate setpoints. 
In particular, the pitch-rate setpoint $\dot{\Theta}_{\rms}$ and the roll-rate setpoint $\dot{\Phi}_{\rms} $ are given by
\begin{align}
    \dot{\Theta}_{\rms} 
        &=
            k_\theta(\Theta_{\rms} - \Theta_{\rmm})
    \label{eq:pitch_rate_P}
        , \\        
    \dot{\Phi}_{\rms} 
        &=
            k_\phi (\Phi_{\rms} - \Phi_{\rmm}), 
    \label{eq:Roll_rate_P}
\end{align}
where $k_{\theta}, k_{\phi}$ are the scalar gains.
The yaw-rate is computed algebraically by 
\begin{align}
    \dot{\Psi}_{\rms} 
        = \frac{g \tan\Phi_{\rms} \cos\Theta_{\rms}}{V_{\rmT}}
    \label{eq:coordinated_turn}
\end{align}
to ensure coordinated turn. 
The body-fixed angular-velocity setpoint $\omega_\rms
$ is given by
\begin{align}
    \omega_\rms
        &= 
            S(\Theta_{\rmm}, \Phi_{\rmm})
            \begin{bmatrix} 
                \dot{\Phi}_{\rms} \\ \dot{\Theta}_{\rms} \\ \dot{\Psi}_{\rms} 
            \end{bmatrix},
    \label{eq:e2omega}
\end{align}
where
\begin{align}
    S(\theta, \Phi)
        \isdef
    \begin{bmatrix} 
        1 & 0 & \sin \Theta \\
        0 & \cos \Phi & \sin \Phi \cos \Theta \\
        0 & -\sin\Phi & \cos\Phi \cos\Theta
    \end{bmatrix}.
\end{align}

Next, a feedforward and a PI control law generates the anglular-acceleration setpoint $\alpha_s
$.
In particular, $\alpha_s $ is given by
\begin{align}
    \alpha_s
        &=
            \frac{V_{\rmT, 0}}{V_{\rmT}} G_{\omega,{\rm ff}} 
            \omega_\rms+
            \left(\frac{V_{\rmI,0}}{V_{\rmI}}\right)^2 G_{\omega,{\rm PI}}(\shiftq) 
            \left( \omega_\rms - \omega_\rmm
            \right), \label{eq:alpha_s}
\end{align}
where $G_{\omega,{\rm ff}} = k_{\omega, \rm ff} $ is a proportional control law, 
$G_{\omega,{\rm PI}}(\shiftq)  = k_{\omega, {\rm P}} + \dfrac{k_{\omega, {\rm I}}}{\textbf{q}-1}$ is a PI control law,
$V_{\rmI}$ is the indicated airspeed,
and $V_{\rmT,0}$ and $V_{\rmI,0}$ are the true airspeed and the indicated airspeed at trim conditions respectively, which are aircraft parameters.
%
Note that 
$\shiftq$ is the forward-shift operator, 
$k_{\omega, \rm ff}, $
$k_{\omega, \rm P},$ and
$k_{\omega, \rm I}$ are $3\times 3$ diagonal matrices, and are thus parameterized by 9 scalar gains. 
Finally, using the angular-acceleration setpoint, the actuator deflections are computed using control allocation methods.

\begin{figure}[h]
    \centering
    \resizebox{\columnwidth}{!}{
    \begin{tikzpicture}[auto, node distance=2cm,>=latex']
    \node[smallblock, minimum width = 2.5em, minimum height = 4 em] (e2q) { \eqref{eq:e2omega}};
    \node[smallblock, above left = -0.5em and 4em of e2q.center, minimum width = 2.5em, minimum height = 2.5em] (G_E) {\eqref{eq:pitch_rate_P}, \eqref{eq:Roll_rate_P}};
    \node[smallblock, below left = 1.5em and 4em of e2q.center, minimum width = 2.5em, minimum height = 2.5em] (tc) {\eqref{eq:coordinated_turn}};
    \node[sum, left = 2.5em of G_E.center] (suml1){};
    \node[draw = none] at (suml1.center) {$+$};
    %
    %
    \node[sum, right = 2.5 em of e2q] (sumr1){};
    \node[draw = none] at (sumr1.center) {$+$};
    \node[smallblock, right = 1em of sumr1, minimum width = 2.5em, minimum height = 1.75 em] (G_PI) {$G_{\omega, {\rm PI}}$};
    \node[smallblock, above = 1em of G_PI, minimum width = 2.5em, minimum height = 1.75 em] (G_FF) {$G_{\omega, {\rm FF}}$};
    \node[smallblock, right = 1em of G_PI, minimum width = 2.5em, minimum height = 1.75 em] (sc) {\eqref{eq:alpha_s}};
    \draw [->] (sc.east) -- ([xshift = 1.5em]sc.east) node [above, xshift=-0.4 em] { {}$\alpha_s$};
    %
    \draw[->]([yshift = -0.25em]G_E.east) -- node[above]{ {} $\begin{bmatrix} \dot{\Phi}_{\rms} \\ \dot{\Theta}_{\rms} \end{bmatrix}$}([yshift = 0.5em]e2q.west);
    \draw[->](tc.east)-|([xshift = -2em, yshift = -1.25em]e2q.west)-- node[above, xshift = -0.35 em]{ {}$\dot{\Psi}_{\rms}$} ([yshift = -1.25em]e2q.west);
    %
    \draw[->](suml1.east) -- (G_E.west);
    \draw[->]([xshift = -0.75 em]suml1.west) |- ([yshift = 0.5em]tc.west);
    \draw[->]([xshift = -5.5em, yshift = -0.5em]tc.west) -- node[above, xshift = -2em]{ {}$V_{\rmm}$} ([yshift = -0.5em]tc.west);
    \draw[->]([xshift = -3.25em]suml1.west) -- node[above, xshift = -1em]{ {}$\begin{bmatrix}\Phi_{\rms} \\ \Theta_{\rms}\end{bmatrix}$} (suml1.west);
    \draw[->]([yshift = 1em]suml1.north) -- node[xshift = -1.4 em, yshift = 1.75em]{ {}$\begin{bmatrix}\Phi_{\rmm} \\ \Theta_{\rmm} \end{bmatrix}$} node[xshift = -0.15em, yshift = -0.05em]{$-$} (suml1.north);
    \draw[->](e2q.east)--node[above, xshift = -0.4em]{ {}$\omega_{\rms}$}(sumr1.west);
    \draw[->](sumr1.east) -- (G_PI.west);
    \draw[->]([yshift = -1em]sumr1.south) -- node[xshift = 1em, yshift = -1em]{ {}$\omega_{\rmm}$} node [xshift = 1.25em, yshift = -0.1em]{$-$} (sumr1.south);
    \draw[->]([xshift = 1.75em]e2q.east) |- (G_FF.west);
    \draw[->](G_FF.east)-|
    (sc.north);
    \draw[->](G_PI.east) -- 
    (sc.west);
    \draw[->]([yshift = -1em]sc.south) -- node[xshift = 1.75em, yshift = -1.25em]{ {}$V_{\rmT}, V_{\rmI}$} (sc.south);
    \end{tikzpicture}
    }
    \caption{\footnotesize PX4 autopilot attitude controller.}
    \label{fig:PX4_autopilot_inner_loop}
\end{figure}
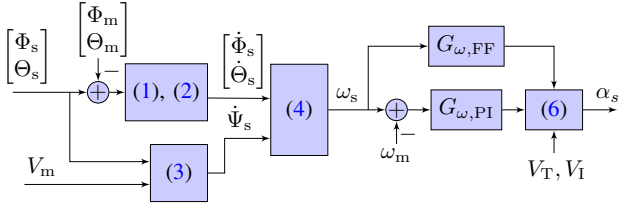

The fixed-wing autopilot thus consists of 11 gains.
%
In practice, these 11 gains are manually tuned and require considerable expertise. 

\section{RCAC algorithm}
\label{sec:RCAC}


This section briefly reviews the retrospective cost adaptive control (RCAC) algorithm. 
%
Consider a SISO PID controller with a feedforward term
\begin{align}
    u_k
        &=
            K_{\rmp,k} z_{k-1} +
            K_{\rmi,k} \gamma_{k-1} 
            \nn \\ &\quad +
            K_{\rmd,k} (z_{k-1} - z_{k-2}) +
            K_{{\rm ff},k} r_k
            ,
    \label{eq:uk_PID}
\end{align}
where $K_{\rmp,k}, K_{\rmi,k}, K_{\rmd,k}, $ and $K_{{\rm ff},k}$ are time-varying gains to be optimized, 
$z_k$ is an error variable,
$r_k$ is the feedforward signal, 
and, for all $k\ge0$,
\begin{align}
    \gamma_k 
        \isdef
            \sum_{i=0}^{k} g(z_{i}).
\end{align}
Note that the integrator state is computed recursively using $\gamma_k = \gamma_{k-1} + g(z_{k-1})$.
For all $k\ge0$, 
the control law can be written as 
\begin{align}
    u_k 
        =
            \phi_k \theta_k,
    \label{eq:uk_reg}
\end{align}
where 
the regressor $\phi_k$ 
and 
the controller gains $\theta_k$ 
are 
\begin{align}
    \phi_k
        \isdef
            \matl{c}
                z_{k-1} \\
                \gamma_{k-1} \\
                z_{k-1} - z_{k-2} \\
                r_k
            \matr^\rmT, \quad
    \theta_k
        \isdef
            \matl{c}
                K_{\rmp,k} \\
                K_{\rmi,k} \\
                K_{\rmd,k} \\
                K_{{\rm ff},k}
            \matr .
    \label{eq:phi_theta_def}
\end{align}
Note that the P, PI, or PID controllers can be parameterized by appropriately defining $\phi_k$ and $\theta_k.$ 
Various MIMO controller parameterizations are shown in \cite{goel_2020_sparse_para}.

To determine the controller gains $\theta_k$, let $\theta \in \BBR^{l_\theta}$, and consider the \textit{retrospective performance variable} defined by
\begin{align}
    \hat{z}_{k}(\theta)
        \isdef
            z_k + 
            \sigma (\phi_{k-1} \theta - u_{k-1}),
    \label{eq:zhat_def}
\end{align}
where $\sigma \in \BBR.$
The sign of $\sigma$ is the sign of the leading numerator coefficient of the transfer function from $u_k$ to $z_k.$
Furthermore, define the \textit{retrospective cost function} $J_k \colon \BBR^{l_\theta} \to [0,\infty)$ by
\begin{align}
    J_k(\theta) 
        &\isdef
            \sum_{i=0}^k
                \hat{z}_{i}(\theta) ^\rmT 
                R_z
                \hat{z}_{i}(\theta)
                 +
                (\phi_k \theta)^\rmT
                R_u
                (\phi_k \theta)
                \nn \\ &\quad \quad +
                (\theta-\theta_0)^\rmT 
                P_0^{-1}
                (\theta-\theta_0),
    \label{eq:RetCost_def}
\end{align}
where $\theta_0\in\BBR^{l_\theta}$ is the initial vector of PID gains and $P_0\in\BBR^{l_\theta\times l_\theta}$ is positive definite.
%

\begin{proposition}
    Consider \eqref{eq:uk_reg}--\eqref{eq:RetCost_def}, 
    where $\theta_0 \in \BBR^{l_\theta}$ and $P_0 \in \BBR^{l_\theta \times l_\theta}$ is positive definite. 
    Furthermore, for all $k\ge0$, denote the minimizer of $J_k$ given by \eqref{eq:RetCost_def} by
    \begin{align}
        \theta_{k+1}
            \isdef
                \underset{ \theta \in \BBR^n  }{\operatorname{argmin}} \
                J_k({\theta}).
        \label{eq:theta_minimizer_def}
    \end{align}
    Then, for all $k\ge0$, $\theta_{k+1}$ is given by 
    \begin{align}
        \theta_{k+1} 
            &=
                \theta_k  - 
                 \sigma P_{k+1}\phi_{k-1}^\rmT R_z
                 [ z_k + \sigma(\phi_{k-1} \theta_k - u_{k-1}) ]
                 \\ &\quad \quad 
                 - 
                 P_{k+1}\phi_{k}^\rmT
                 R_u \phi_{k} \theta_k 
                 , \label{eq:theta_update}
    \end{align}
    where 
    \begin{align}
        P_{k+1} 
            &=
                P_{k}
                -  
                P_k  
            \Phi_k ^\rmT 
            \left( 
                \bar R ^{-1} +  
                \Phi_k
                P_k
                \Phi_k ^\rmT 
            \right)^{-1}
            \Phi_k,
        \label{eq:P_update_noInverse}
    \end{align}
    and
    \begin{align}
        \Phi_k
            \isdef 
            \matl{c}
                \sigma \phi_{k-1} \\
                \phi_k
            \matr,
        \quad 
        \bar R
            \isdef 
                \matl{cc}
                    R_z & 0 \\
                    0   & R_u
                \matr.
    \end{align}
\end{proposition}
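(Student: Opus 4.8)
The plan is to treat \eqref{eq:RetCost_def} as a regularized least-squares problem in $\theta$ and to recover \eqref{eq:theta_update}--\eqref{eq:P_update_noInverse} as the standard recursive-least-squares (RLS) update. First I would observe that, by \eqref{eq:zhat_def}, each $\hat z_i(\theta)$ is affine in $\theta$, so $J_k$ is a quadratic function of $\theta$; moreover the regularization term $(\theta-\theta_0)^\rmT P_0^{-1}(\theta-\theta_0)$ with $P_0^{-1}\succ 0$ makes the Hessian positive definite, so $J_k$ is strictly convex and the minimizer \eqref{eq:theta_minimizer_def} is unique and characterized by the first-order condition $\nabla J_k(\theta_{k+1})=0$.

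Next I would expand $J_k$ and collect its quadratic and linear parts into the normal equation $A_k\theta_{k+1}=b_k$, where the information matrix and the linear term accumulate the per-step contributions. Writing $\hat z_i(\theta)=\sigma\phi_{i-1}\theta+(z_i-\sigma u_{i-1})$ and using the stacked regressor $\Phi_i$ and block weight $\bar R$, one finds $A_k=P_0^{-1}+\sum_{i=0}^k\Phi_i^\rmT\bar R\,\Phi_i$ together with a matching expression for $b_k$; defining $P_{k+1}\isdef A_k^{-1}$ gives $\theta_{k+1}=P_{k+1}b_k$. The crucial structural observation is the recursion $A_k=A_{k-1}+\Phi_k^\rmT\bar R\,\Phi_k$, i.e. $P_{k+1}^{-1}=P_k^{-1}+\Phi_k^\rmT\bar R\,\Phi_k$; here one must read the $R_u$-penalty in \eqref{eq:RetCost_def} as accumulating over $i$ (equivalently, as the lower block of $\Phi_i^\rmT\bar R\,\Phi_i$), since otherwise the $\phi_k^\rmT R_u\phi_k$ term would not close into the clean rank update that \eqref{eq:P_update_noInverse} presumes.

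The covariance recursion \eqref{eq:P_update_noInverse} then follows immediately from $P_{k+1}^{-1}=P_k^{-1}+\Phi_k^\rmT\bar R\,\Phi_k$ by the Sherman--Morrison--Woodbury identity, with $\bar R$ playing the role of the inner weight. For the gain update \eqref{eq:theta_update}, I would substitute $b_{k-1}=P_k^{-1}\theta_k$ (the previous step's normal equation) and $P_k^{-1}=P_{k+1}^{-1}-\Phi_k^\rmT\bar R\,\Phi_k$ into $\theta_{k+1}=P_{k+1}b_k=P_{k+1}\bigl(b_{k-1}-\sigma\phi_{k-1}^\rmT R_z(z_k-\sigma u_{k-1})\bigr)$, which yields $\theta_{k+1}=\theta_k-P_{k+1}\Phi_k^\rmT\bar R\,\Phi_k\theta_k-\sigma P_{k+1}\phi_{k-1}^\rmT R_z(z_k-\sigma u_{k-1})$. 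Expanding $\Phi_k^\rmT\bar R\,\Phi_k\theta_k$ into its $R_z$ and $R_u$ blocks and regrouping the $R_z$ contributions reassembles exactly the bracket $z_k+\sigma(\phi_{k-1}\theta_k-u_{k-1})$ of \eqref{eq:theta_update}, while the $R_u$ block supplies the remaining $-P_{k+1}\phi_k^\rmT R_u\phi_k\theta_k$ term.

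I expect the main obstacle to be the algebraic bookkeeping in this last step: correctly eliminating the accumulated sums in $b_k$ in favor of $\theta_k$ via $b_{k-1}=P_k^{-1}\theta_k$, and then regrouping the $\sigma\phi_{k-1}$ terms so that the retrospective residual reappears as $\hat z_k(\theta_k)$. The matrix-inversion-lemma step and the convexity argument are routine; the only additional conceptual care needed is the consistent treatment of the initialization ($k=0$, with conventions for $\phi_{-1}$, $z_{-1}$, $u_{-1}$) and the accumulation of the control penalty noted above.
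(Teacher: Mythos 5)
Your proposal is correct, and it matches the paper's approach in the only sense available: the paper offers no inline proof at all, its entire argument being the citation ``See \cite{AseemRLS}'', and what you have reconstructed is precisely the standard regularized recursive-least-squares derivation contained in that reference (quadratic strictly convex cost, normal equations, information-matrix recursion with the stacked regressor $\Phi_k$ and block weight $\bar R$, Sherman--Morrison--Woodbury, then elimination of the accumulated linear term via $b_{k-1}=P_k^{-1}\theta_k$). Two of your observations deserve emphasis because they expose imprecisions in the statement itself. First, your reading of the $R_u$ penalty as accumulating over $i$, i.e.\ $\sum_{i=0}^k(\phi_i\theta)^\rmT R_u(\phi_i\theta)$, is not optional but necessary: under the literal reading of \eqref{eq:RetCost_def}, where the control penalty is applied only with the regressor $\phi_k$, the information matrices at consecutive steps would differ by $\sigma^2\phi_{k-1}^\rmT R_z\phi_{k-1}+\phi_k^\rmT R_u\phi_k-\phi_{k-1}^\rmT R_u\phi_{k-1}$, the stale $R_u$ term would not cancel, and neither \eqref{eq:theta_update} nor \eqref{eq:P_update_noInverse} would characterize the minimizer. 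Second, you should have flagged a companion typo in \eqref{eq:P_update_noInverse} itself: applying the matrix inversion lemma to $P_{k+1}^{-1}=P_k^{-1}+\Phi_k^\rmT\bar R\,\Phi_k$ yields $P_{k+1}=P_k-P_k\Phi_k^\rmT\left(\bar R^{-1}+\Phi_kP_k\Phi_k^\rmT\right)^{-1}\Phi_kP_k$, with a trailing factor $P_k$ that is missing from the equation as printed; your derivation produces this correct form, so your claim that the printed recursion ``follows immediately'' is accurate only after that correction. With these two emendations the final regrouping you describe does close exactly: the $R_z$ block of $\Phi_k^\rmT\bar R\,\Phi_k\theta_k$ combines with $\sigma\phi_{k-1}^\rmT R_z(z_k-\sigma u_{k-1})$ to reassemble the retrospective residual $z_k+\sigma(\phi_{k-1}\theta_k-u_{k-1})$, and the $R_u$ block supplies the remaining term $-P_{k+1}\phi_k^\rmT R_u\phi_k\theta_k$, which is \eqref{eq:theta_update}.
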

\begin{proof}
See \cite{AseemRLS}.
\end{proof}

Finally, the control is given by
\begin{align}
    u_{k+1} = \phi_{k+1} \theta_{k+1}.
\end{align}

\section{Adaptive PX4 Autopilot}
\label{sec:adaptiveAugmentation}
This section describes the adaptive autopilot, which is constructed by augmenting the fixed-gain autopilot described in Section \ref{sec:PX4_autopilot}.
In particular, each fixed-gain control law is modified by adding an adaptive control law, which is updated by the retrospective cost adaptive control (RCAC) algorithm described in Section \ref{sec:RCAC}.
The output of the modified controller is thus given by the sum of the fixed-gain control law and the adaptive control law, as shown in Figure \ref{fig:Augmented_PX4_autopilot_inner_loop}. 

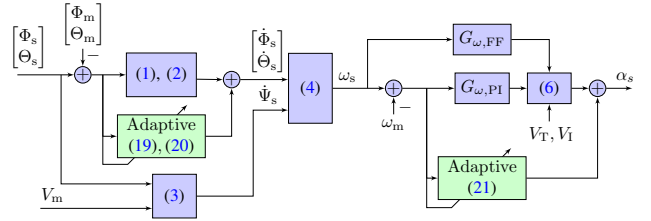
\begin{figure}[h]
    \centering
    \resizebox{\columnwidth}{!}{
    \begin{tikzpicture}[auto, node distance=2cm,>=latex']
    \node[smallblock, minimum width = 2.5em, minimum height = 4 em] (e2q) { \eqref{eq:e2omega}};
    \node[smallblock, above left = -0.5em and 6.5em of e2q.center, minimum width = 4em, minimum height = 2.5em] (G_E) {\eqref{eq:pitch_rate_P}, \eqref{eq:Roll_rate_P}};
    \node[smallblock, below left = 4.9em and 6.5em of e2q.center, minimum width = 2.5em, minimum height = 2.5em] (tc) {\eqref{eq:coordinated_turn}};
    \node[sum, left = 4em of G_E.center] (suml1){};
    \node[draw = none] at (suml1.center) {$+$};
    \node[sum, below right = -0.075em and 1.65em of G_E.east] (suml2){};
    \node[draw = none] at (suml2.center) {$+$};
    \node[sum, right = 3em of e2q] (sumr1){};
    \node[draw = none] at (sumr1.center) {$+$};
    \node [right = 0.3 em of suml1.east] (suml1_hook) {};
    \draw [->] (suml1_hook.center) -- +(0,-1.8) -- +(.5,-1.8) -- +(1.7,-.6);
    \node[smallblock, fill=green!20, below = 1em of G_E.south, minimum width = 0.5em, minimum height = 2 em, inner sep=0.25pt](G_E_adap){ {}$\begin{array}{c} {\rm Adaptive} \\ \eqref{eq:adaptive_pitch_rate_P}, \eqref{eq:adaptive_roll_rate_P} \end{array}$};
    \draw[->] (suml1_hook.center)|-(G_E_adap.west);
    \node[smallblock, right = 3em of sumr1, minimum width = 2.5em, minimum height = 1.75 em] (G_PI) {$G_{\omega, {\rm PI}}$};
    \node [left = 1.25 em of G_PI.west] (G_PI_hook) {};
    \draw [->] (G_PI_hook.center) -- +(0,-2.4) -- +(.5,-2.4) -- +(1.7,-1.2);
    \node[smallblock, fill=green!20, below = 3em of G_PI.south, minimum width = 0.5em, minimum height = 2 em, inner sep=0.25pt](G_PI_adap){ {}$\begin{array}{c} {\rm Adaptive} \\ \eqref{eq:alpha_s_adaptive} \end{array}$};
    \draw[->] (G_PI_hook.center)|-(G_PI_adap.west);
    \node[smallblock, above = 1em of G_PI, minimum width = 2.5em, minimum height = 1.75 em] (G_FF) {$G_{\omega, {\rm FF}}$};
    %
    %
    \node[smallblock, right = 1em of G_PI, minimum width = 2.5em, minimum height = 1.75 em] (sc) {\eqref{eq:alpha_s}};
    \node[sum, right = 1em of sc] (sumPI){};
    \node[draw = none] at (sumPI.center) {$+$};
    \draw [->] (sumPI.east) -- ([xshift = 1.5em]sumPI.east) node [above, xshift=-0.4 em] { {}$\alpha_s$};
    \draw[->]([yshift = -0.25em]G_E.east) -- (suml2.west);
    \draw[->](suml2.east) -- node[above]{ {} $\begin{bmatrix} \dot{\Phi}_{\rms} \\ \dot{\Theta}_{\rms} \end{bmatrix}$}([yshift = 0.5em]e2q.west);
    \draw[->](G_E_adap.east) -| (suml2.south);
    \draw[->](tc.east)-|([xshift = -2em, yshift = -1.25em]e2q.west)-- node[above, xshift = -0.15 em]{ {}$\dot{\Psi}_{\rms}$} ([yshift = -1.25em]e2q.west);
    \draw[->](suml1.east) -- (G_E.west);
    \draw[->]([xshift = -0.75 em]suml1.west) |- ([yshift = 0.7em]tc.west);
    \draw[->]([xshift = -6.4em, yshift = -0.7em]tc.west) -- node[above, xshift = -2.5em]{ {}$V_{\rmm}$} ([yshift = -0.7em]tc.west);
    \draw[->]([xshift = -3.25em]suml1.west) -- node[above, xshift = -1em]{ {}$\begin{bmatrix}\Phi_{\rms} \\ \Theta_{\rms}\end{bmatrix}$} (suml1.west);
    \draw[->]([yshift = 1em]suml1.north) -- node[xshift = -1.6 em, yshift = 1.9em]{ {}$\begin{bmatrix}\Phi_{\rmm} \\ \Theta_{\rmm} \end{bmatrix}$} node[xshift = -0.15em, yshift = 0.1em]{$-$} (suml1.north);
    \draw[->](e2q.east)--node[above, xshift = -0.65em]{ {}$\omega_{\rms}$}(sumr1.west);
    \draw[->](sumr1.east) -- (G_PI.west);
    \draw[->]([yshift = -1em]sumr1.south) -- node[xshift = 1em, yshift = -1.1em]{ {}$\omega_{\rmm}$} node [xshift = 1.4em, yshift = -0.1em]{$-$} (sumr1.south);
    \draw[->]([xshift = 2em]e2q.east) |- (G_FF.west);
    %
    %
    \draw[->](G_FF.east)-|
    (sc.north);
    \draw[->](G_PI_adap)-|(sumPI.south);
    \draw[->](G_PI.east)--(sc.west);
    \draw[->](sc.east) -- 
    (sumPI.west);
    \draw[->]([yshift = -1em]sc.south) -- node[xshift = 1.75em, yshift = -1.25em]{ {}$V_{\rmT}, V_{\rmI}$} (sc.south);
    \end{tikzpicture}
    }
    \caption{\footnotesize Adaptive PX4 autopilot attitude controller.}
    \label{fig:Augmented_PX4_autopilot_inner_loop}
\end{figure}
The bank and elevation rate setpoints $\dot{\Phi}_{\rms}, \dot{\Theta}_{\rms}$ in the adaptive autopilot are given by
%
\begin{align}
    \dot{\Theta}_{\rms} 
        &=
            k_\theta(\Theta_{\rms} - \Theta_{\rmm})
            + 
            u_\Theta
        \label{eq:adaptive_pitch_rate_P}
        , \\        
    \dot{\Phi}_{\rms} 
        &=
            k_\phi (\Phi_{\rms} - \Phi_{\rmm})
            +
            u_\Phi
            , 
        \label{eq:adaptive_roll_rate_P}
\end{align}
where $u_\Theta$ and $u_\Phi$ are computed by the RCAC algorithm.
Note that $u_\Theta$ and $u_\Phi$ are scalars. 
Similarly, the angular acceleration setpoint in the adaptive autopilot is given by 
\small
\begin{align}
    \alpha_s
        &=
            \frac{V_{\rmT, 0}}{V_{\rmT}} G_{\omega,{\rm ff}} 
            \omega_\rms 
            \nn \\ 
            & \quad +
            \left(\frac{V_{\rmI,0}}{V_{\rmI}}\right)^2 G_{\omega,{\rm PI}} (\shiftq) 
            \left( \omega_\rms - \omega_\rmm
            \right)
            +
            u_{\omega, \rm PI}, \label{eq:alpha_s_adaptive}
\end{align}
\normalsize
where
$u_{\omega, \rm PI}$ is computed by the RCAC algorithm.
Note that $ u_{\omega, \rm PI} \in \BBR^3,$ and each component of $u_{\omega, \rm PI}$ is updated by RCAC, where the error variable is the corresponding error term.

\section{Simulation Results}
\label{sec:flight_tests}

This section investigates the performance of the adaptive autopilot and compares it to the performance of the fixed-gain autopilot with stock PX4 gains. 
The fixed-wing aircraft is simulated in Gazebo.
In particular, the standard plane model \footnote{\href{https://docs.px4.io/master/en/simulation/gazebo_vehicles.html\#standard_plane}{Gazebo standard plane simulator}} provided by the PX4 developers is used.
The default controller gains and the actuator constraints in PX4 are specified in the
\verb|fw_att_control_params.c|
\footnote{\href{https://github.com/JAParedes/PX4-Autopilot/tree/RCAC_FW_dev}{https://github.com/JAParedes/PX4-Autopilot/tree/RCAC\_FW\_dev}}.
Note that the results in this paper are based on PX4 version V1.10.0dev.

The default controller gains in PX4 are well-tuned for the Gazebo model considered in this paper. 
To investigate the potential improvements in the performance, the default controller gains in PX4 are multiplied by a scalar $\alpha_\rmd$ to degrade the performance of the autopilot. 
This is equivalent to the case of poor initial choice of controller gains. 
The baseline performance is obtained with $\alpha_\rmd=1$.
A value of $\alpha_\rmd \neq 1$ is assumed to degrade the performance of the autopilot.
%
%
In the case where $\alpha_\rmd = 0,$ the fixed-gain autopilot is completely switched off, and the adaptive autopilot learns the controller gains in-situ.
%
%
Table \ref{tab:RCPE_variables_SITL} lists the hyperparameters used by RCAC.
Note that the RCAC hyperparameters are not changed as $\alpha_\rmd$ is varied. 

Figure \ref{fig:flightPathFW} shows the trajectory of the fixed-wing aircraft for various values of $\alpha_\rmd$ with both the fixed-gain and the adaptive autopilot. 
The baseline performance is obtained with $\alpha_\rmd=1,$ and in this case, adaptive autopilot does not affect the performance. 
For $\alpha_\rmd=0.5,$ the trajectory following response degrades substantially for the fixed-gain autopilot, and in this case, the adaptive autopilot recovers the baseline performance. 
Finally, the adaptive autopilot is also able to learn the gains in situ, in the case where the fixed-gain autopilot is completely switched off, that is, $\alpha_\rmd=0.$
Figure \ref{fig:thetaPlot} shows the gains adapted by RCAC for the three values of $\alpha_\rmd. $
Note that as the fixed-gain controller is detuned, RCAC compensates by providing larger values of gains. 


\begin{table}[h]
    \caption{\footnotesize RCAC hyperparameters in the adaptive autopilot. 
    }
    
    \label{tab:RCPE_variables_SITL}
    \centering
    \renewcommand{\arraystretch}{1.2}
    \begin{tabular}{|c|l|l|}
        \hline
        \multicolumn{1}{|c|}{\textbf{Controller}}  & \multicolumn{1}{|c|}{${P_0}$}  & \multicolumn{1}{|c|}{${R_u}$}
        \\ \hhline{|=|=|=|}
        \eqref{eq:adaptive_pitch_rate_P}, $\theta_\Theta$
        & $0.01$ & $0.001$
        \\ \hline
        \eqref{eq:adaptive_roll_rate_P}, $\theta_\Phi$ & $1$ & $0.001$
        \\ \hline
        \eqref{eq:alpha_s_adaptive}, $\theta_{\dot \Theta}$
        & $1000$ & $0.1$
        \\ \hline
        \eqref{eq:alpha_s_adaptive}, $\theta_{\dot \Phi}$
        & $0.001$ & $0.1$
        \\ \hline
    \end{tabular}
\end{table}

\begin{figure}
    \centering
    \includegraphics[width=\columnwidth]{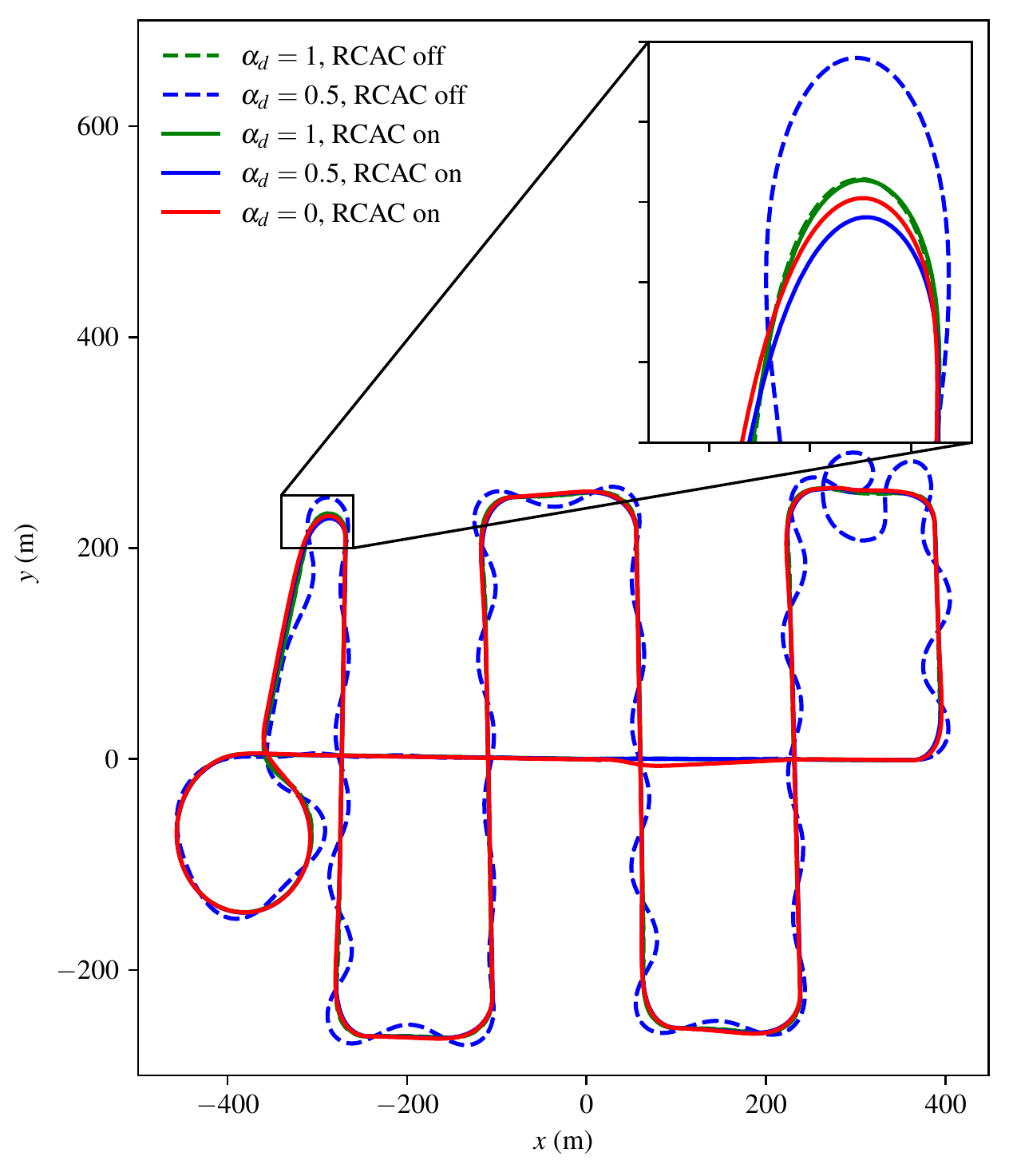}
    \caption{Trajectory-following response of the fixed-wing aircraft with the fixed-gain and the adaptive autopilot for various values of $\alpha_\rmd.$
    }
    \label{fig:flightPathFW}
\end{figure}

\begin{figure}
    \centering
    \includegraphics[width=\columnwidth]{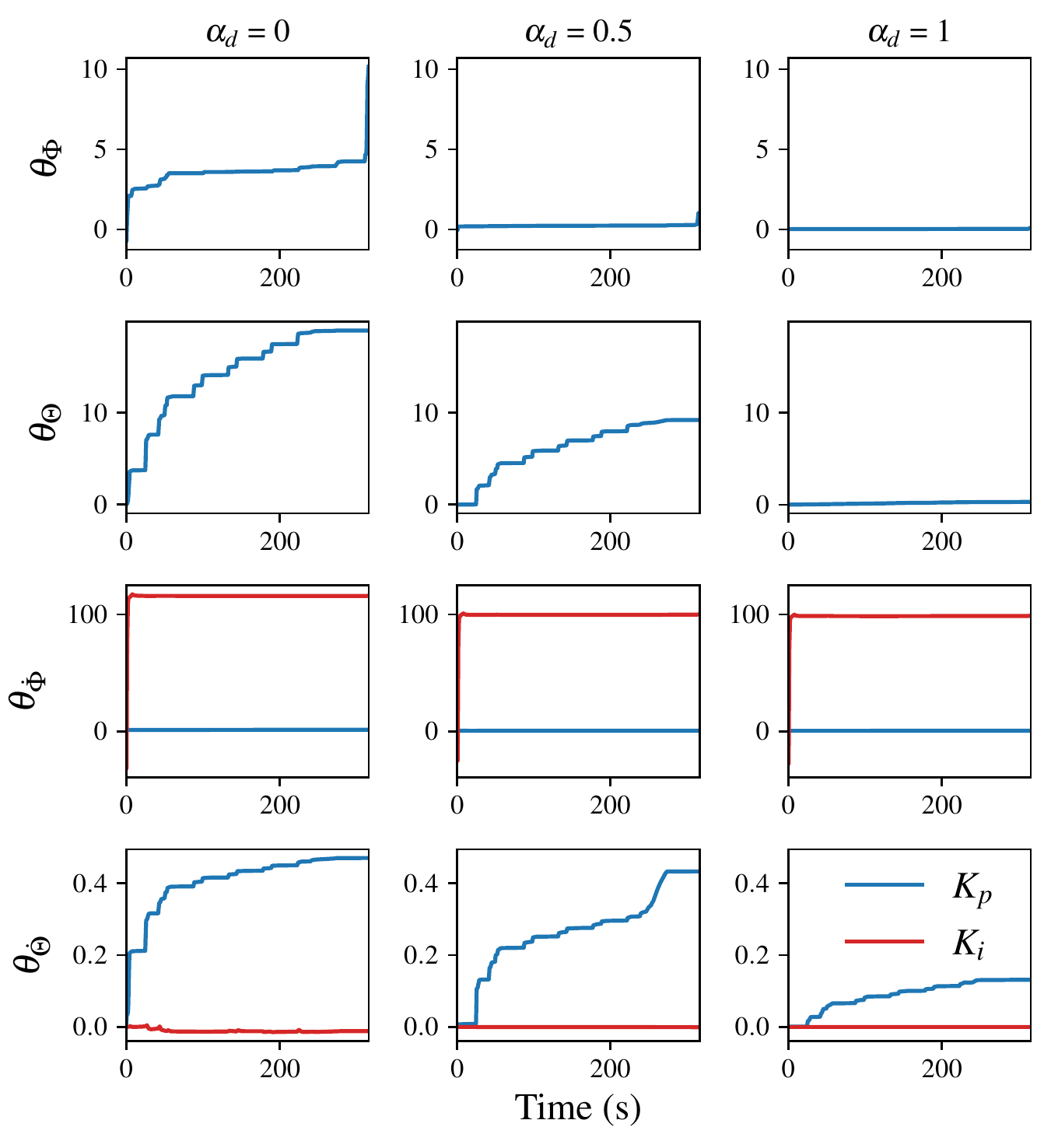}
    \caption{Adaptive autopilot gains adapted by the RCAC algorithm for various values of $\alpha_\rmd$.
    Note that as the fixed-gain controller is detuned, RCAC compensates by providing larger values of gains.
    In the case where $\alpha_\rmd=1,$ RCAC gains are small, whereas, for $\alpha_\rmd<1,$ RCAC gains are larger. 
    }
    \label{fig:thetaPlot}
\end{figure}

In order to quantify and compare the performance of the {adaptive
autopilot} with the {fixed-gain autopilot}, bank, elevation and trajectory-tracking error metrics are defined as
\begin{align}
    J_{\Phi} &\isdef \sqrt{\frac{1}{N} \sum_{i=1}^N (\Phi_{{\rms},i} - \Phi_{{\rmm},i})^2}, \label{eq:cost_bank} \\
    J_{\Theta} &\isdef \sqrt{\frac{1}{N} \sum_{i=1}^N (\Theta_{{\rms},i} - \Theta_{{\rmm},i})^2},\label{eq:cost_elevation} \\
    J_{\rm traj} &\isdef \sqrt{\frac{1}{N} \sum_{i=1}^N e_{{\rm x-track},i}^2},\label{eq:cost_xtrack}
\end{align}
where 
$N$ is the number of measurements during the flight, 
$e_{{\rm x-track}}$ is the cross-track error, defined as the minimum distance between the current position and desired trajectory. 
%
Note that these error metrics are computed after each flight test. 
Figure \ref{fig:RollCost} shows the error metrics for various values of $\alpha_\rmd$ with both the fixed-gain and the adaptive autopilot.

%

%

\begin{figure}[b]
\vspace{-0.5em}
    \centering
    \includegraphics[width=\columnwidth]{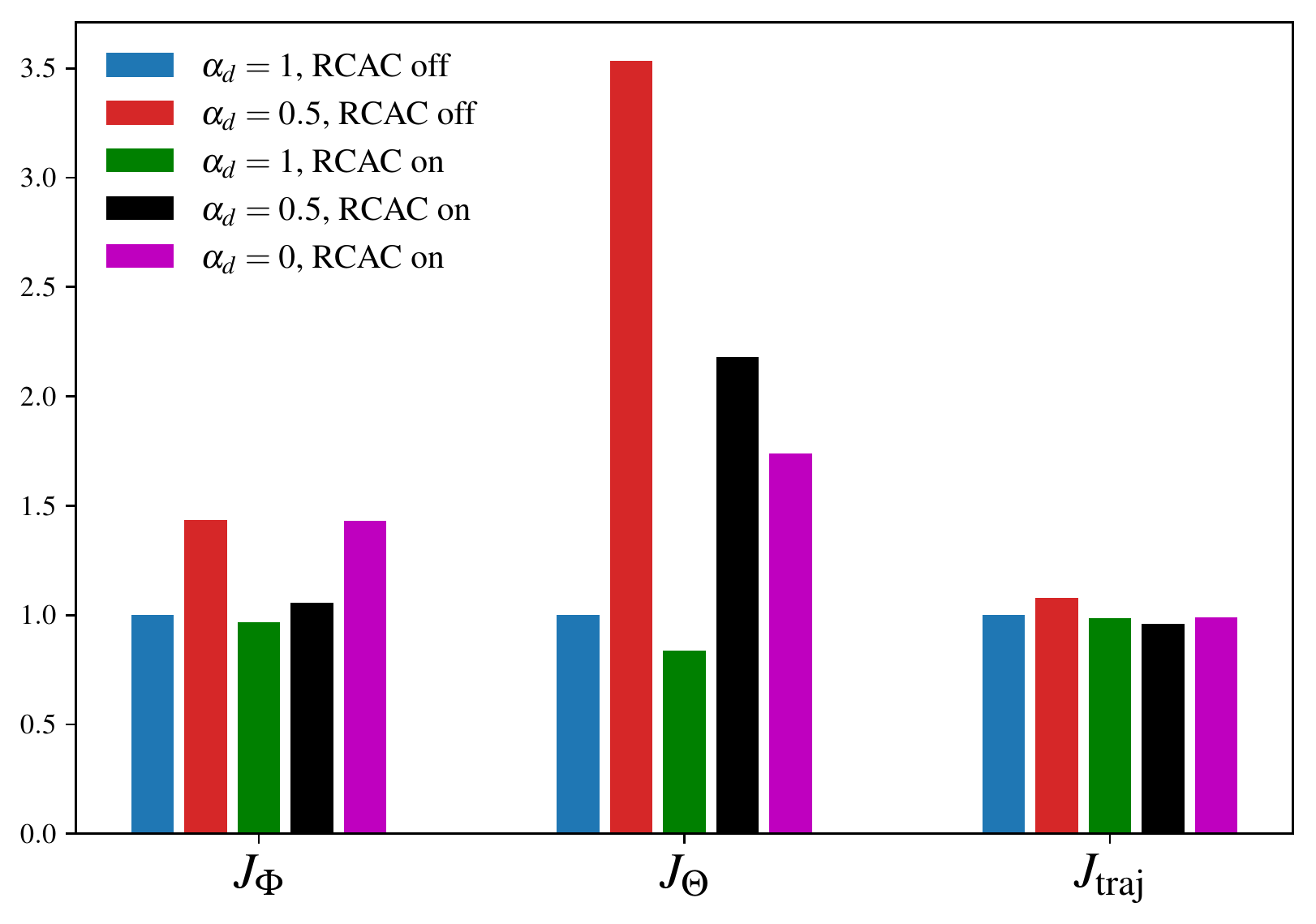}
    \caption{Bank, Elevation, and trajectory-tracking error metrics obtained with the fixed-gain and the adaptive autopilots for different values of $\alpha_\rmd.$
    Note that all of the error metrics are scaled by the benchmark error metric, where the benchmark flight is obtained by setting $\alpha_\rmd=1$ and switching off the adaptive autopilot.
    }
    \label{fig:RollCost}
\end{figure}

Next, the case of actuator failure is considered. 
In particular, the left aileron is assumed to be stuck at an unknown angular position. 
Figure \ref{fig:stuck_actuator} shows the actuator failure scenario. 
\begin{figure}
    \centering
    \includegraphics[width=\columnwidth]{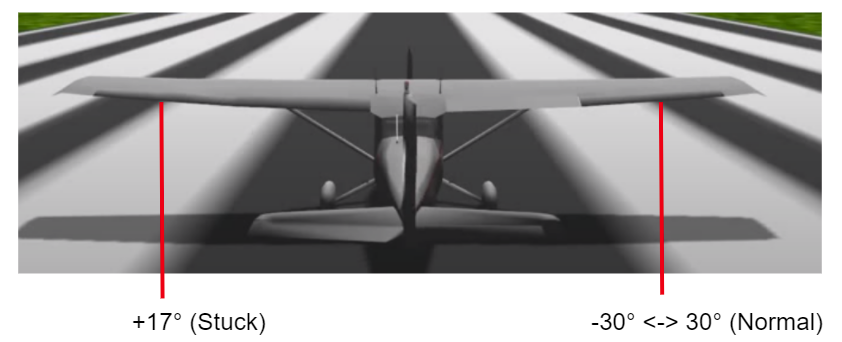}
    \caption{Actuator failure scenario. The left aileron is stuck at an unknown angular position. 
    }
    \label{fig:stuck_actuator}
\end{figure}

\begin{figure}
    \centering
    \includegraphics[width=\columnwidth]{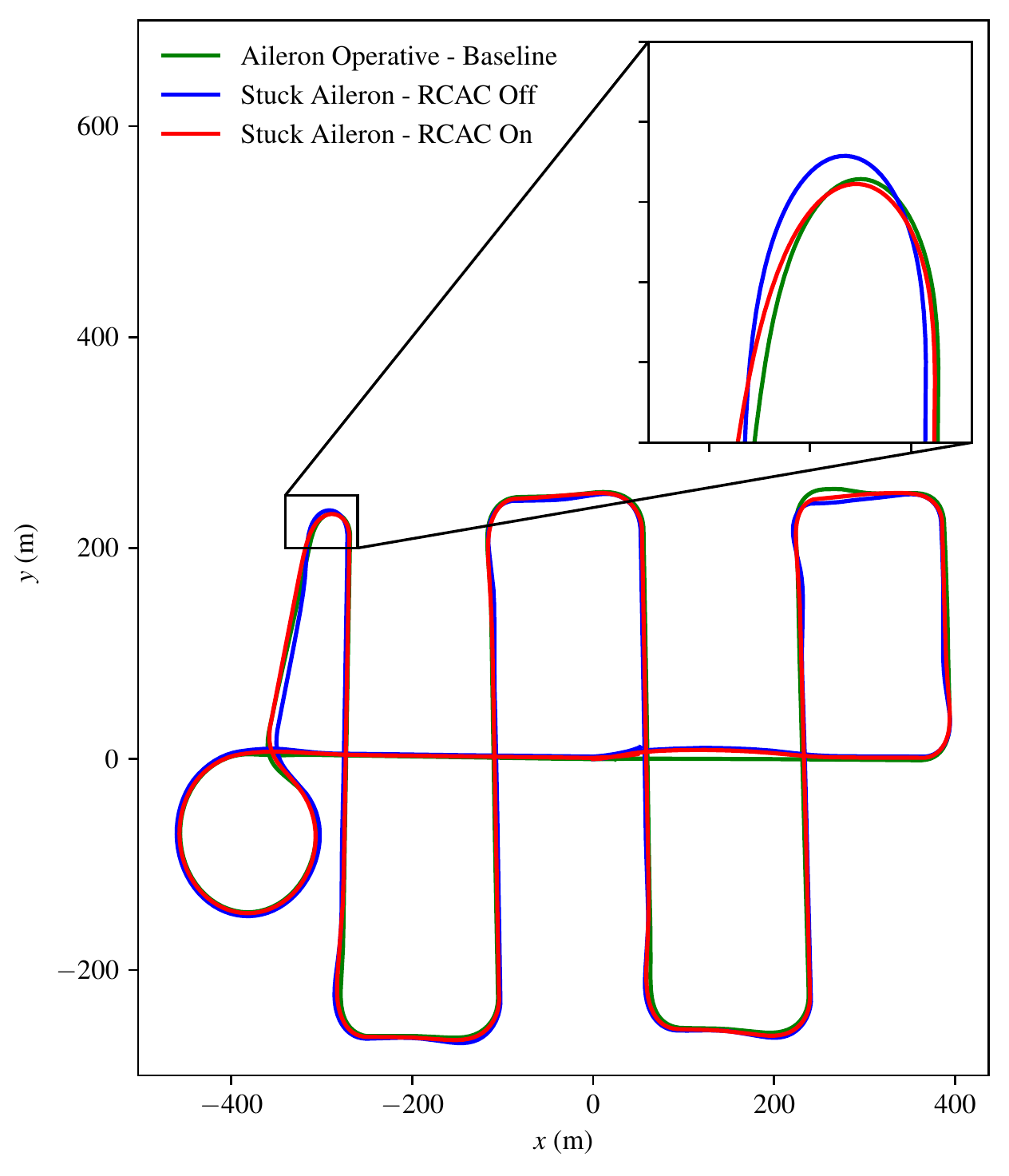}
    \caption{Trajectory-following response of the fixed-wing aircraft with frozen left aileron.
    }
    \label{fig:flightPathFW_stuck_actuator}
    \vspace{-1em}
\end{figure}

The aircraft is first commanded to follow a trajectory with the fixed-gain autopilot without any actuator failure.
This flight is considered to be the benchmark performance. 
Next, the left aileron is then frozen at an unknown angular position, and the aircraft is commanded to follow the same trajectory with the fixed-gain autopilot.
In this case, as expected, the trajectory-tracking error increases.
Finally, with the left aileron is frozen, the aircraft is commanded to follow the same trajectory with the adaptive autopilot.
Figure \ref{fig:flightPathFW_stuck_actuator} shows the trajectory-following response in the three cases. 
Note that the adaptive autopilot improves the trajectory-tracking error in the case of the stuck actuator and recovers the benchmark performance.
Figure \ref{fig:thetaPlot_stuck_actuator} shows the gains adapted by RCAC and Figure \ref{fig:RollCost_stuck_actuator} shows the error metrics in the case of the stuck actuator.

\begin{figure}
    \centering
    \includegraphics[width=\columnwidth]{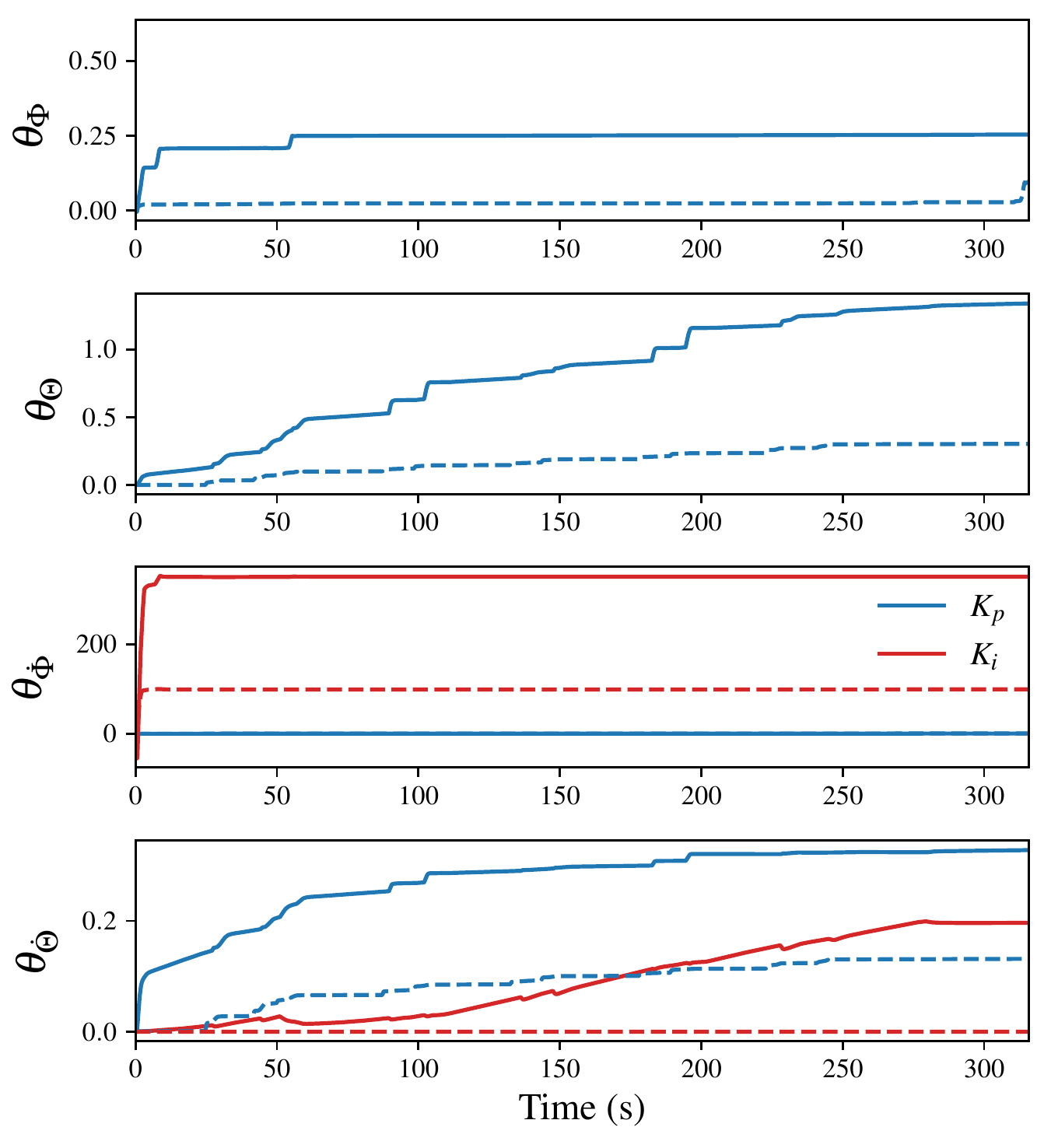}
    \caption{Adaptive autopilot gains adapted by the RCAC algorithm in the case of frozen aileron.
    Note that the dashed traces show the adaptive autopilot gains given by the RCAC algorithm in the nominal case, that is, with the operative aileron.
    }
    \label{fig:thetaPlot_stuck_actuator}
\end{figure}

\begin{figure}
    \centering
    \includegraphics[width=\columnwidth]{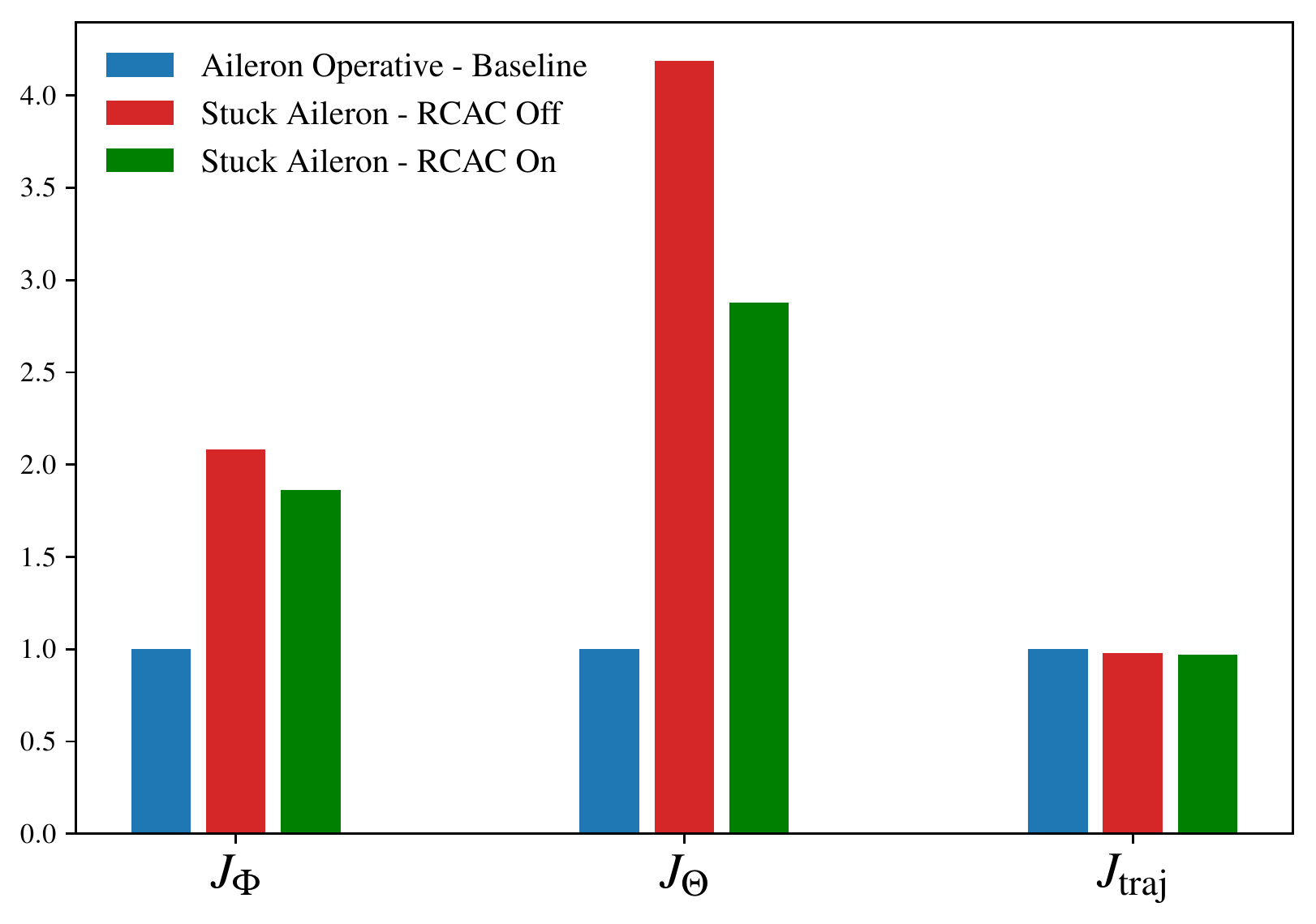}
    \caption{Bank, Elevation, and trajectory-tracking error metrics obtained with the fixed-gain and the adaptive autopilots for the actuator failure scenario.  
    Note that the error metrics are scaled by the benchmark error metric, where the benchmark flight is obtained with the operative aileron and without the adaptive autopilot.
    }
    \label{fig:RollCost_stuck_actuator}
\end{figure}

\section{Conclusions and Future Work}
\label{sec:conclusions}

%
This paper presented an adaptive digital autopilot that can improve an initial poor choice of controller gains in a fixed-gain autopilot, and learn the autopilot gains without any prior knowledge of the dynamics . 
The adaptive autopilot is constructed by augmenting the fixed-gain controllers in an autopilot with adaptive controllers.
%
The adaptive autopilot was used to fly a fixed-wing aircraft model in the Gazebo simulator.
The adaptive autopilot recovered the performance in the case where the fixed-gain autopilot was degraded and learned a set of gains in the case where the fixed-gain autopilot was completely switched off. 
Furthermore, the adaptive autopilot improved the trajectory-tracking performance in the case where the aileron was stuck at an unknown angular position.
Future work will focus on experimental investigation of the adaptive autopilot in order to assess the performance of the adaptive autopilot under sudden changes, such as stuck actuators, chipped wing, and changes in environmental conditions.

 \bibliographystyle{IEEEtran}
 \bibliography{main_FW_adaptive_arxiv}

\end{document}